\definecolor{darkgreen}{rgb}{0.0,0,0.9}
\newtcolorbox{wbox}
{
	colback  = white,
}
\let\R\relax
\newcommand*{\R}{\mathbb{R}}
\newcommand*{\suppress}[1]{}
\newcommand*{\cI}{\mathcal{I}}
\newcommand*{\cM}{\mathcal{M}}
\newcommand*{\cR}{\mathcal{R}}
\newcommand*{\oG}{\overline{G}}
\newcommand*{\oE}{\overline{E}}
\newcommand{\sat}{\mbox{\rm satisfaction}}
\newcommand{\worth}{\mbox{\rm worth}}
\def\thm@space@setup{%
	\thm@preskip= 10pt
	\thm@postskip=\thm@preskip 
}
\renewcommand{\paragraph}{%
	\@startsection{paragraph}{4}%
	{\z@}{5pt}{-1em}%
	{\normalfont\normalsize\bfseries}%
}
\newtheorem{theorem}{Theorem}
\newtheorem{corollary}{Corollary}
\theoremstyle{definition}
\newtheorem{definition}{Definition}
\newtheorem{remark}{Remark}
\newenvironment{fminipage}%
{\begin{Sbox}\begin{minipage}}%
		{\end{minipage}\end{Sbox}\fbox{\TheSbox}}
\newcommand\QQ{\boldsymbol{\mathit{Q}}}
\newcommand\RR{\boldsymbol{\mathit{R}}}
\title{Cores of Games via Total Dual Integrality, \\
with Applications to Perfect Graphs and Polymatroids}
\author[1]{Vijay V.~Vazirani\footnote{Supported in part by NSF grants CCF-1815901 and CCF-2230414.}}
\affil[1]{University of California, Irvine}
\date{}
\begin{document}
	\maketitle
	
\begin{abstract}
LP-duality theory has played a central role in the study of cores of games, right from the early days of this notion to the present time. The classic paper of Shapley and Shubik \cite{Shapley1971assignment} introduced the ``right'' way of exploiting the power of this theory, namely picking problems whose LP-relaxations support polyhedra having integral vertices. So far, the latter fact was established by showing that the constraint matrix of the underlying linear system is {\em totally unimodular}.
	
We attempt to take this methodology to its logical next step -- {\em using total dual integrality} -- thereby addressing new classes of games which have their origins in two major theories within combinatorial optimization, namely perfect graphs and polymatroids. In the former, we address the stable set and clique games and in the latter, we address the matroid independent set game. For each of these games, we prove that the set of core imputations is precisely the set of optimal solutions to the dual LPs.   

Another novelty is the way the worth of the game is allocated among sub-coalitions. Previous works follow the {\em bottom-up process} of allocating to individual agents; the  allocation to a sub-coalition is simply the sum of the allocations to its agents. The {\em natural process for our games is top-down}. The optimal dual allocates to ``objects'' in the grand coalition; a sub-coalition inherits the allocation of each object with which it has non-empty intersection.  
\end{abstract}

\bigskip
\bigskip
\bigskip
\bigskip
\bigskip
\bigskip
\bigskip
\bigskip
\bigskip
\bigskip
\bigskip
\bigskip
\bigskip
\bigskip
\bigskip
\bigskip
\bigskip
\bigskip

\pagebreak

\section{Introduction}
\label{sec.intro}

The {\em core} is a  quintessential solution concept in cooperative game theory. It captures all possible ways of distributing the total worth of a game among individual agents and sub-coalitions in such a way that the grand coalition remains intact, i.e., a sub-coalition will not be able to generate more profits by itself and therefore has no incentive to secede from the grand coalition. 

LP-duality theory has played a central role in the study of the core, right from its early days all the way to the present time. The early works of Bondareva \cite{Bondareva1963some} and Shapley \cite{Shapley1965balanced} gave a necessary and sufficient condition for non-emptiness of the core of a game, namely that it is ``balanced''. The proof of this fact is based on LP-duality theory and uses the Farkas Lemma. Although this is an insightful fact, the LP employed for proving it involves exponentially many variables, one for each subset of players. Consequently, to the best of our knowledge, it did not lead to an efficient way of computing core imputations for any natural game. 

The classic paper of Shapley and Shubik \cite{Shapley1971assignment} introduced the ``right'' way of exploiting the power of LP-duality theory for characterizing core imputations. They did this in the context of the assignment game\footnote{The assignment game can also be viewed as a matching market in which utilities of the agents are stated in monetary terms and side payments are allowed, i.e., it is a {\em transferable utility (TU) market}. For an extensive coverage of these notions, see the book by Moulin \cite{Moulin2014cooperative}.}, for which they showed that the set of core imputations is precisely the set of optimal solutions to the dual of the LP-relaxation of the maximum weight matching problem in the underlying bipartite graph. See Section \ref{sec.Complementarity} for details. 

It is important to note that this dual LP is small --- it has one variable for each agent and one constraint for each ``team'', consisting of one agent from each side of the bipartition. In general, an optimal solution to such an LP will be fractional, i.e., it will pick agents fractionally. Clearly, that is not very meaningful. In retrospect, {\em a key ingenuity of Shapley and Shubik} lay in choosing a game whose LP-relaxation has the following important property: {\em the polytope defined by its constraints has integral vertices}; in this case, they are matchings in the graph. 

Deng et al. \cite{Deng1999algorithms} distilled the ideas underlying the Shapley-Shubik Theorem to obtain a general framework, see Section \ref{sec.Deng}, which helps characterize the cores of several games that are based on fundamental combinatorial optimization problems, including maximum flow in unit capacity networks both directed and undirected, maximum number of edge-disjoint $s$-$t$ paths, maximum number of vertex-disjoint $s$-$t$ paths, maximum number of disjoint arborescences rooted at a vertex $r$ and concurrent games (i.e., general graph matching games having a non-empty core). This was followed by other works in the same vein, see Section \ref{sec.related}. Similar to the Shapley-Shubik result, a central fact underlying these successes is the integrality of the vertices of their polyhedra, which was {\em so far established by showing that the underlying linear system is totally unimodular,} see Definition \ref{def.totally-unimodular}.   

We attempt to take this methodology to its logical next step -- {\em using total dual integrality (TDI)}\footnote{See Definition \ref{def.TDI}.} -- thereby addressing new classes of games which have their origins in two major theories within combinatorial optimization, namely perfect graphs and polymatroids. In the former, we address the stable set and clique games and in the latter, we address the matroid independent set game, given via a rank oracle. For each of these games, we prove that the set of core imputations is precisely the set of optimal solutions to the dual LPs.  The TDI of the underlying linear systems, and the consequent integrality of their  polytopes, plays a key role in the proofs. Our results for the perfect graph and polymatroid games are presented in Sections \ref{sec.perfect} and \ref{sec.polymatroid}, respectively.  

Perhaps the most intriguing aspect of our work is the novel manner in which we need to allocate the worth of the game among sub-coalitions. As stated in Remark \ref{rem.top-down}, whereas in the assignment game this is done via a bottom-up process, we need to follow a top-down process for both games we study. To the best of our knowledge, all previous works characterizing the cores of games follow a bottom-up process, similar to the assignment game.

\begin{definition}
	\label{def.totally-unimodular}
	Let $Ax \leq b$ be a linear system where $A$ is an $m \times n$ matrix and $b$ an $m$-dimensional vector, both with integral entries.  $A$ is said to be {\em totally unimodular (TUM)} if every submatrix of $A$ has determinant $0, 1$ or $-1$. If so, the vertices of the polytope of this linear system are integral. 
\end{definition}

\begin{definition}
	\label{def.TDI}
	Let $Ax \leq b$ be a linear system where $A$ is an $m \times n$ matrix and $b$ an $m$-dimensional vector, both with rational entries. We will say that this linear system is {\em totally dual integral (TDI)} if for any integer-valued vector $c^T$ such that the linear program
	
\[ \max \{ cx: Ax \leq b \} \]

has an optimum solution, the corresponding dual linear program has an {\em integer} optimal solution. If so, the vertices of the polytope of this linear system are integral. 
\end{definition}

Note that TDI is a more general condition than TUM for integrality of polyhedra. If $A$ is TUM then the polyhedron of the linear system $Ax \leq b$ has integral vertices for every integral vector $b$. However, even if $A$ is not TUM, for specific choices of an integral vector $b$, the vertices of the polytope of the linear system $Ax \leq b$ may be integral, and TDI may apply in this situation. It is important to remark that TDI is not a property of the polytope but of the particular linear system chosen to define it. See \cite{GLS, Sch-book} for further details.

\section{Related Works}
\label{sec.related}

An imputation in the core has to ensure that {\em each} of the exponentially many sub-coalitions is ``happy'' --- clearly, that is a lot of constraints. As a result, the core is non-empty only for  a handful of games, some of which are mentioned in the Introduction. A different kind of game, in which preferences are ordinal, is based on the stable matching problem defined by Gale and Shapley \cite{GaleS}. The only coalitions that matter in this game are ones formed by one agent from each side of the bipartition. A stable matching ensures that no such coalition has the incentive to secede and the set of such matchings constitute the core of this game. Vande Vate \cite{Vate1989linear} and Rothblum \cite{Rothblum1992characterization} gave linear programming formulations for stable matchings; the vertices of their underlying polytopes are integral and are stable matchings. More recently, Kiraly and Pap \cite{TDI-Kiraly2008total} showed that the linear system of Rothblum is in fact TDI. 

To deal with games having an empty core, e.g., the general graph matching game, the following two notions have been given in the past. The first is that of {\em least core}, defined by Mascher et al. \cite{Leastcore-Maschler1979geometric}. If the core is empty, there will necessarily be sets $S \subseteq V$ such that $v(S) < p(S)$ for any imputation $v$. The least core maximizes the minimum of $v(S) - p(S)$ over all sets $S \subseteq V$, subject to $v(\emptyset) = 0$ and $v(V) = p(V)$. This involves solving an LP with exponentially many constraints, though, if a separation oracle can be implemented in polynomial time, then the ellipsoid algorithm will accomplish this in polynomial time \cite{GLS}; see below for a resolution for the case of the matching game.  

A more well known notion is that of {\em nucleolus} which is contained in the least core.  After maximizing the minimum of $v(S) - p(S)$ over all sets $S \subseteq V$, it does the same for all remaining sets and so on. A formal definition is given below.

\begin{definition}
	\label{def.nucleolus}
	For an imputation $v: {V} \rightarrow \cR_+$, let $\theta(v)$ be the vector obtained by sorting the $2^{|V|} - 2$ values $v(S) - p(S)$ for each $\emptyset \subset S \subset V$ in non-decreasing order. Then the unique imputation, $v$, that lexicographically maximizes $\theta(v)$ is called the {\em nucleolus} and is denoted $\nu(G)$. 
\end{definition} 

In 1998, \cite{Faigle1998nucleon} stated the problem of computing the nucleolus of the matching game in polynomial time. For the unit weight case, this was done by Kern and Paulusma \cite{Kern2003matching}, and for arbitrary weights by Konemann et al. \cite{Konemann2020computing}. However, their algorithm makes extensive use of the ellipsoid algorithm and is therefore neither efficient nor does it give deep insights into the underlying combinatorial structure. They leave the open problem of finding a combinatorial polynomial time algorithm. We note that the difference $v(S) - p(S)$ appearing in the least core and nucleolus has not been upper-bounded for any standard family of games, including the general graph matching game. 

A different notion was recently proposed in \cite{Va.general}, namely {\em approximate core}, and was used to obtain an imputation in the $2/3$-approximate core for the general graph matching game. This imputation can be computed in polynomial time, again using the power of LP-duality theory. This result was extended to $b$-matching games in general graphs by Xiaoet al. \cite{b-matching-approximate}. 

Konemann et al. \cite{b-matching-Konemann} showed that computing the nucleolus of the constrained bipartite $b$-matching game, in which an edge can be matched at most once, is NP-hard even for the case $b=3$ for all vertices. Biro et al. \cite{Biro2012computing} showed that the core non-emptiness and core membership problems for the  $b$-matching game are solvable in polynomial time if $b \leq 2$ and are co-NP-hard even for $b = 3$.

Deng and Papadimitriou \cite{Papa-Deng1994complexity} define several computational questions regarding the core, e.g., testing non-emptyness, membership etc. and study them for a specific game on an undirected graph with positive and negative weight edges. Granot and Huberman \cite{Granot1981minimum, Granot-2-1984core} showed that the core of the minimum cost spanning tree game is non-empty and gave an algorithm for finding an imputation in it. Megiddo \cite{Megiddo1978computational} shows how to compute the nucleolus and Shapley value of this game. Koh and Sanita \cite{Laura-Sanita} settle the question of efficiently determining if a spanning tree game is submodular; the core of such games is always non-empty. Nagamochi et al. \cite{Nagamochi1997complexity} characterize non-emptyness of core for the minimum base game in a matroid; the minimum spanning tree game is a special case of this game. For the $b$-matching game, non-emptyness of the core was established by showing that every optimal dual solution is a core imputation \cite{Va.Char}; this involved exploiting total unimodularity of the constraint matrix. 




\section{The Core of the Assignment Game}
\label{sec.Complementarity}

In this section, we will give a short overview of the Shapley-Shubik work as well as the framework of Deng et al. \cite{Deng1999algorithms}, since they provide a valuable background for the rest of the paper.
 
The following setting, taken from \cite{Eriksson2001stable} and \cite{Biro2012computing}, vividly captures the issues underlying profit-sharing in an assignment game. Suppose a coed tennis club has sets $U$ and $V$ of women and men players, respectively, who can participate in an upcoming mixed doubles tournament. Assume $|U| = m$ and $|V| = n$, where $m, n$ are arbitrary. Let $G = (U, V, E)$ be a bipartite graph whose vertices are the women and men players and an edge $(i, j)$ represents the fact that agents $i \in U$ and $j \in V$ are eligible to participate as a mixed doubles team in the tournament. Let $w$ be an edge-weight function for $G$, where $w_{i j} > 0$ represents the expected earnings if $i$ and $j$ do participate as a team in the tournament. The total worth of the game is the weight of a maximum weight matching in $G$.

Assume that the club picks such a matching for the tournament. The question is how to distribute the total profit among the agents --- strong players, weak players and unmatched players --- so that no subset of players feel they will be better off seceding and forming their own tennis club. 

\begin{definition}
	\label{sec.coalition}
The {\em assignment game}, as defined above, will be denoted by $G = (U, V, E), \ w: E \rightarrow \cR_+$. The set of all players, $U \cup V$, is called the {\em grand coalition}. A subset of the players, $(S_u \cup S_v)$, with $S_u \subseteq U$ and $S_v \subseteq V$, is called a {\em coalition} or a {\em sub-coalition}.
\end{definition}

\begin{definition}
	\label{def.worth}
	The {\em worth} of a coalition $(S_u \cup S_v)$ is defined to be the maximum profit that can be generated by teams within $(S_u \cup S_v)$ and is denoted by $p(S_u \cup S_v)$. Formally, $p(S_u \cup S_v)$ is the weight of a maximum weight matching in the graph $G$ restricted to vertices in $(S_u \cup S_v)$ only. $p(U \cup V)$ is called the {\em worth of the game}. The {\em characteristic function} of the game is defined to be $p: 2^{U \cup V} \rightarrow \cR_+$.   
\end{definition}

\begin{definition}
	\label{def.imputation}	
	An {\em imputation}\footnote{Some authors prefer to call this a pre-imputation, while using the term imputation when individual rationality is also satisfied.} gives a way of dividing the worth of the game, $p(U \cup V)$, among the agents. It consists of two functions $u: {U} \rightarrow \cR_+$ and $v: {V} \rightarrow \cR_+$ such that $\sum_{i \in U} {u(i)} + \sum_{j \in V} {v(j)} = p(U \cup V)$. 
\end{definition}
	
\begin{definition}
	\label{def.core}
	An imputation $(u, v)$ is said to be in the {\em core of the assignment game} if for any coalition $(S_u \cup S_v)$, the total worth allocated to agents in the coalition is at least as large as the worth that they can generate by themselves, i.e., $\sum_{i \in S_u} {u(i)} +  \sum_{j \in S_v} {v(j)} \geq p(S)$.
\end{definition}
 
We next describe the characterization of the core of the assignment game given by Shapley and Shubik \cite{Shapley1971assignment}. As stated in Definition \ref{def.worth}, the worth of the game, $G = (U, V, E), \ w: E \rightarrow \cR_+$, is the weight of a maximum weight matching in $G$. Linear program (\ref{eq.core-primal-bipartite}) gives the LP-relaxation of the problem of finding such a matching. In this program, variable $x_{ij}$ indicates the extent to which edge $(i, j)$ is picked in the solution. Matching theory tells us that this LP always has an integral optimal solution \cite{LP.book}; the latter is a maximum weight matching in $G$.

	\begin{maxi}
		{} {\sum_{(i, j) \in E}  {w_{ij} x_{ij}}}
			{\label{eq.core-primal-bipartite}}
		{}
		\addConstraint{\sum_{(i, j) \in E} {x_{ij}}}{\leq 1 \quad}{\forall i \in U}
		\addConstraint{\sum_{(i, j) \in E} {x_{ij}}}{\leq 1 }{\forall j \in V}
		\addConstraint{x_{ij}}{\geq 0}{\forall (i, j) \in E}
	\end{maxi}

Taking $u_i$ and $v_j$ to be the dual variables for the first and second constraints of (\ref{eq.core-primal-bipartite}), we obtain the dual LP: 

 	\begin{mini}
		{} {\sum_{i \in U}  {u_{i}} + \sum_{j \in V} {v_j}} 
			{\label{eq.core-dual-bipartite}}
		{}
		\addConstraint{ u_i + v_j}{ \geq w_{ij} \quad }{\forall (i, j) \in E}
		\addConstraint{u_{i}}{\geq 0}{\forall i \in U}
		\addConstraint{v_{j}}{\geq 0}{\forall j \in V}
	\end{mini}

\begin{theorem}
	\label{thm.SS}
	(Shapley and Shubik \cite{Shapley1971assignment})
The imputation $(u, v)$ is in the core of the assignment game if and only if it is an optimal solution to the dual LP, (\ref{eq.core-dual-bipartite}). 
\end{theorem}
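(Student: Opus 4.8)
The plan is to prove the two implications separately, and in each direction the only real tool is LP-duality combined with the integrality of the bipartite matching polytope (equivalently, total unimodularity of its incidence matrix, cf. Definition \ref{def.totally-unimodular}), which is what lets us identify the combinatorial quantity $p(S)$ with the value of a linear program both for the grand coalition and for every sub-coalition.

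\textbf{Dual optimal $\Rightarrow$ core.} Suppose $(u,v)$ is an optimal solution to (\ref{eq.core-dual-bipartite}). I would first check it is a legal imputation: the dual constraints give $u_i,v_j\ge 0$, and by strong LP-duality the dual optimum equals the primal optimum of (\ref{eq.core-primal-bipartite}), which — since that polytope has integral vertices — is the weight of a maximum weight matching in $G$, i.e.\ $p(U\cup V)$. Hence $\sum_{i\in U}u_i+\sum_{j\in V}v_j=p(U\cup V)$. Next, fix a coalition $S=S_u\cup S_v$ and restrict attention to the edges of $G$ with both endpoints in $S$. The key observation is that $(u|_{S_u},v|_{S_v})$ is feasible for the dual of the maximum-weight-matching LP on $G[S]$, because every constraint of that smaller dual ($u_i+v_j\ge w_{ij}$ for an edge inside $S$, plus nonnegativity) is already a constraint of (\ref{eq.core-dual-bipartite}). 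By weak LP-duality its objective value $\sum_{i\in S_u}u_i+\sum_{j\in S_v}v_j$ is at least the optimal primal value on $G[S]$, which again by integrality equals $p(S)$. That is exactly the core inequality of Definition \ref{def.core}.

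\textbf{Core $\Rightarrow$ dual optimal.} Conversely, let $(u,v)$ be a core imputation. I would obtain dual feasibility by testing small coalitions: the singleton $\{i\}$ has worth $0$, which forces $u_i\ge 0$ (and symmetrically $v_j\ge 0$), while the two-element coalition $\{i,j\}$ for an edge $(i,j)\in E$ has worth $w_{ij}$ (a single-edge matching), which forces $u_i+v_j\ge w_{ij}$. Thus $(u,v)$ satisfies every constraint of (\ref{eq.core-dual-bipartite}). Its objective value is $\sum_{i\in U}u_i+\sum_{j\in V}v_j=p(U\cup V)$ by the imputation property, and this is precisely the common optimal value of (\ref{eq.core-primal-bipartite}) and (\ref{eq.core-dual-bipartite}), using integrality of the primal together with strong duality. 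A feasible dual solution attaining the optimal objective value is optimal, which finishes this direction.

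\textbf{Where the work is.} Structurally there is no deep step here: the argument is a packaging of weak/strong LP-duality. The one point that genuinely must be invoked, rather than derived on the spot, is that the matching LP (\ref{eq.core-primal-bipartite}) is integral on the bipartite graph $G$ \emph{and} on every induced subgraph $G[S]$ — so that ``worth'' and ``LP optimum'' coincide uniformly. The only mild subtlety to handle carefully is the claim that the dual of the restricted problem is a sub-system of (\ref{eq.core-dual-bipartite}): deleting vertices only removes dual constraints and restricts to a subset of the (nonnegative) dual variables, so a dual-feasible point of the big LP, restricted, is dual-feasible for the small one and its objective can only drop — which is exactly what the weak-duality comparison needs.
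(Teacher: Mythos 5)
Your proof is correct: since the paper only cites this theorem from Shapley--Shubik rather than proving it, there is no in-paper proof to diverge from, and your argument is the standard one, following exactly the scheme the paper itself uses for Theorems \ref{thm.stable} and \ref{thm.matroid} (strong duality plus integrality of the primal polytope for the grand coalition, restriction of the dual solution plus weak duality for sub-coalitions, and singleton/edge coalitions to recover dual feasibility in the converse direction). The only cosmetic remark is that for the sub-coalition inequality you do not even need integrality of the LP on $G[S]$ --- weak duality against the (integral, hence feasible) maximum matching of $G[S]$ already gives $\sum_{i\in S_u}u_i+\sum_{j\in S_v}v_j\geq p(S)$ --- though invoking it does no harm.
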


By Theorem \ref{thm.SS}, the core of the assignment game is a convex polyhedron. Finally, we state a fundamental fact about LP (\ref{eq.core-primal-bipartite}); its proof follows from the fact that the constraint matrix of this LP is totally unimodular. 

\begin{theorem}
	\label{thm.Birkhoff}
	(Birkhoff \cite{Birkhoff1946three})	The vertices of the polytope defined by the constraints of LP (\ref{eq.core-primal-bipartite}) are $0/1$ vectors, i.e., they are matchings in $G$.
\end{theorem}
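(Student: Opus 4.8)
The plan is to follow the route flagged in the text: exhibit the constraint matrix of LP~(\ref{eq.core-primal-bipartite}) as a totally unimodular matrix, invoke Definition~\ref{def.totally-unimodular} to conclude that every vertex of the polytope is integral, and then upgrade ``integral'' to ``$0/1$'' and observe that such vectors are exactly the matchings of $G$.

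Concretely, I would first write the constraint matrix out. Discarding for the moment the nonnegativity constraints, the two families of degree constraints of (\ref{eq.core-primal-bipartite}) have constraint matrix $A \in \{0,1\}^{(U\cup V)\times E}$ with $A_{v,e}=1$ iff $v$ is an endpoint of $e$; this is the vertex--edge incidence matrix of the bipartite graph $G$, and every one of its columns has exactly two $1$'s, one in a $U$-row and one in a $V$-row. I would then apply the Heller--Tompkins / Ghouila-Houri sufficient condition: a $0/1$ matrix whose columns each contain at most two nonzero entries is TUM provided its rows can be partitioned into two blocks so that every column with two nonzeros has one nonzero in each block. Here the partition of the rows into the $U$-block and the $V$-block does exactly this, precisely because $G$ is bipartite, so $A$ is TUM; and adjoining the rows of $\pm I$ (encoding $x_{ij}\ge 0$, and if one wishes $x_{ij}\le 1$) preserves total unimodularity. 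Hence the full constraint matrix of (\ref{eq.core-primal-bipartite}) is TUM.

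Its right-hand side is the integral vector whose entries are all $1$ or $0$, so by Definition~\ref{def.totally-unimodular} every vertex $x$ of the polytope defined by (\ref{eq.core-primal-bipartite}) is integral. At such a vertex $x_{ij}\ge 0$ by the nonnegativity constraint, and $x_{ij}\le \sum_{(i,j')\in E}x_{ij'}\le 1$, so every coordinate is an integer in $[0,1]$, i.e.\ $0$ or $1$. Finally, a $0/1$ vector satisfying both families of degree constraints of (\ref{eq.core-primal-bipartite}) is precisely the indicator of an edge set in which every vertex of $G$ has degree at most one, that is, a matching; this is the assertion of the theorem.

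The only genuine content is the total unimodularity of the incidence matrix, so that is where I expect the work to sit; the remaining steps are bookkeeping. If a self-contained argument is preferred, the TUM claim follows by induction on the order of a square submatrix $B$ of $A$: if some column of $B$ has at most one nonzero, expand the determinant along it and apply induction; otherwise every column of $B$ has exactly two $1$'s, split between its $U$-rows and its $V$-rows, so the sum of the $U$-rows of $B$ equals the sum of the $V$-rows of $B$ and $\det B=0$. Alternatively one can bypass total unimodularity entirely: if a vertex $x$ had an edge with $0<x_{ij}<1$, then the set $F$ of such edges, being nonempty, would contain either an (even, since $G$ is bipartite) cycle or a path joining two of its leaves, and alternately perturbing $x$ by $\pm\varepsilon$ along that cycle or path would keep every constraint satisfied for small $\varepsilon$ --- the perturbations cancel at internal vertices, and the degree constraint at a leaf of $F$ is slack --- exhibiting $x$ as the midpoint of two distinct feasible points and contradicting that it is a vertex.
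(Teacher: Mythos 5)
Your proposal is correct and follows essentially the same route the paper indicates: total unimodularity of the bipartite vertex--edge incidence matrix (the paper only remarks that the theorem ``follows from the fact that the constraint matrix of this LP is totally unimodular''), with integrality then upgraded to $0/1$ and identified with matchings. Your self-contained TUM induction and the alternative cycle/path perturbation argument are sound additions, but they do not change the underlying approach.
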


\subsection{The Framework of Deng et al. \cite{Deng1999algorithms}}
\label{sec.Deng}

In this section, we present the framework of Deng et al. \cite{Deng1999algorithms}, which was mentioned in the Introduction. Let $T = \{1, \cdots, n\}$ be the set of $n$ agents of the game. Let $w \in \R^m_+$ be an $m$-dimensional non-negative real vector specifying the weights of certain objects; in the assignment game, the objects are edges of the underlying graph. Let $A$ be an $n \times m$ matrix with $0/1$ entries whose $i^{th}$ row corresponds to agent $i \in T$. Let $x$ be an $m$-dimensional vector of variables and $\mathbb{1}$ be the $n$-dimensional vector of all 1s. Assume that the worth of the game is given by the objective function value of following integer program. 

	\begin{maxi}
		{} {w \cdot x}
			{\label{eq.IP}}
		{}
		\addConstraint{}{Ax \leq \mathbb{1}}
		\addConstraint{}{x \in \{0, 1\}}
	\end{maxi}

Moreover, for a sub-coalition, $T' \subseteq T$ assume that its worth is given by the integer program obtained by replacing $A$ by $A'$ in (\ref{eq.IP}), where $A'$ picks the set of rows corresponding to agents in $T'$. The LP-relaxation of (\ref{eq.IP}) is:

	\begin{maxi}
		{} {w \cdot x}
			{\label{eq.Primal}}
		{}
		\addConstraint{}{Ax \leq \mathbb{1}}
		\addConstraint{}{x \geq 0}
	\end{maxi}

Deng et al. proved that if LP (\ref{eq.Primal}) always has an integral optimal solution, then the set of core imputations of this game is exactly the set of optimal solutions to the dual of LP (\ref{eq.Primal}). So far, for specific games, integrality of the vertices of the polytope defined by this linear system was established by showing that its constraint matrix is totally unimodular, as in Theorem \ref{thm.Birkhoff}.

\section{The Stable Set Game in Perfect Graphs}
\label{sec.perfect}

In Section \ref{sec.def-perfect} we give the required definitions and facts from the (extensive) theory of perfect graphs. We will not credit individual papers for these facts; instead, we refer the reader to Chapter 9 of the book \cite{GLS} as well as the remarkably clear and concise exposition of this theory, presented as an ``appetizer'' by Groetschel \cite{Grotschel1999my}.

\subsection{Definitions and Preliminaries}
\label{sec.def-perfect}

\begin{definition}
	\label{def.omega-chi}
	Given a graph $G = (V, E)$, $\omega(G)$ denotes its {\em clique number}, i.e., the size of the largest clique in it and $\chi(G)$ denotes its {\em chromatic number}, i.e., the minimum number of colors needed for its vertices so that the two endpoints of any edge get different colors. 
\end{definition}

\begin{definition}
	\label{def.perfect}
	A graph $G = (V, E)$ is said to be {\em perfect} if and only if for each vertex-induced subgraph $G' \subseteq G$, $\omega(G') = \chi(G')$.
\end{definition}

Let $\oG$ denote the {\em complement} of $G$, i.e., $\oG = (V, \oE)$, where $\oE$ is the complement of $E$, with $\forall \ u, v \in V, (u,v) \in E$ if and only if $(u, v) \notin \oE$. A central fact about perfect graphs is that $G$ is perfect if and only if $\oG$ is perfect. 

\begin{definition}
	\label{def.stable-set}
	$V' \subseteq V$ is said to be a {\em stable set} in $G$, also sometimes called an {\em independent set}, if no two vertices of $V'$ are connected by an edge, i.e., $\forall \ u, v \in V', \ (u, v) \notin E$. Let $w: V \rightarrow \QQ_+$ be a weight function on the vertices of $G$. Then the {\em worth} of the {\em stable set game} is defined to be the weight of a maximum weight stable set in $G$ and is denoted by $\worth(G)$.  
\end{definition}

The game defined above is well-motivated: $V$ represents a set of agents from which a subset needs to be chosen for performing a certain task. For agent $v \in V$, $w(v)$ specifies the contribution of agent $v$ to performing the task. An edge $(u, v)$ indicates that agents $u$ and $v$ do not get along, to the extent that a sub-coalition involving both of them is bound to fail. Therefore, the most effective sub-coalition is a maximum weight stable set in $G$. However, as is well known, the maximum stable set problem is NP-hard in arbitrary graphs, even with unit weights. For this reason, we restrict to perfect graphs where, as we will see below, it can be solved in polynomial time. 

\begin{remark}	
	\label{rem.complement}
	Since the complement of a perfect graph is also perfect, our result for the stable set game in perfect graphs yields an analogous result for the clique game in perfect graphs. The latter game is also well-motivated. 
\end{remark}

Let $G$ be an arbitrary graph. Clearly any clique in $G$ can intersect a stable set in at most one vertex, and therefore the constraint in LP (\ref{eq.stable-primal}) is satisfied by every stable set; note that variable $x_v$ indicates the extent to which $v$ is picked in a fractional stable set. LP (\ref{eq.stable-primal}) contains such a constraint for each clique in $G$ and is an LP-relaxation of the maximum weight stable set problem in $G$. However, LP (\ref{eq.stable-primal}) has exponentially many constraints, one corresponding to each clique in $G$; moreover, it is NP-hard to solve in general \cite{GLS}.

	\begin{maxi}
		{} {\sum_{v \in V}  {w_{v} x_{v}}}
			{\label{eq.stable-primal}}
		{}
		\addConstraint{x(Q)}{\leq 1 \quad}{\forall \ \mbox{clique $Q$ in} \ G}
		\addConstraint{x_{v}}{\geq 0}{\forall v \in V}
	\end{maxi}

The situation is salvaged in case $G$ is a perfect graph: via a rather long route, touching profound ideas, e.g., a foray into the Lovasz theta function and Shannon entropy, one can can show that LP (\ref{eq.stable-primal}) can be solved in polynomial time, see \cite{GLS, Grotschel1999my}. This requires the use of the ellipsoid algorithm; unfortunately, at present, an efficient combinatorial algorithm is not known. 

Another useful fact is that the constraint system of LP (\ref{eq.stable-primal}) is TDI for perfect graphs, see Definition \ref{def.TDI}. As a consequence, the polytope defined by this system has integral vertices, i.e., they are stable sets. This together with the previous fact implies that in perfect graphs, the worth of the stable set game can be computed in polynomial time. Furthermore, since for each vertex-induced subgraph $G' \subseteq G$, $G'$ is also perfect, this holds for the game restricted to $G'$ as well.

In order to characterize core imputations of this game, we will need the dual LP. This is obtain by taking $y_Q$ to be the dual variable for the constraint of LP (\ref{eq.stable-primal}), and is given in LP (\ref{eq.stable-dual}) below.

 	\begin{mini}
		{} {\sum_{\mbox{clique $Q$ in} \ G}  {y_Q}} 
			{\label{eq.stable-dual}}
		{}
		\addConstraint{\sum_{Q \ni v} {y_Q}}{ \geq w_{v} \quad }{\forall v \in V}
		\addConstraint{y_{Q}}{\geq 0}{\forall \ \mbox{clique $Q$ in} \ G}
	\end{mini}

The dual LP is solving a clique covering problem. In particular, if $w$ is 1 for every vertex, the minimum number of cliques which cover each vertex at least once is an optimal solution to the dual -- this follows from the fact that its constraint system is TDI.

\subsection{Characterizing the Core}
\label{sec.stable-core}

An {\em imputation} describes how the worth of the game leads to allocation of ``profits'' to sub-coalitions. In the assignment game, an optimal dual distributes the worth of the game among the  agents, i.e., the vertices of the graph. The allocation to a sub-coalition is simply the sum of the allocations to the agents in it. 

In our setting, an optimal dual distributes the worth of the game among the cliques of $G$. However, dividing the worth $y_Q$, given to clique $Q$, among the agents in the clique in not very meaningful. Instead, for a clique $Q$, we will view {\em the worth $y_Q$ as residing in every sub-clique of $Q$}. However, this is done not by simply duplicating the worth onto each sub-clique, an operation that is clearly not legitimate. It is done in a more interesting manner, so as to define allocations to sub-coalitions in a consistent manner.  

Before providing further details, let us establish terminology that is more appropriate for the setting at hand. We will think of $y_Q$ as the ``satisfaction'' of clique $Q$. Consider a sub-coalition $T \subseteq V$ such that $Q \cap T \neq \emptyset$, and let $Q' = Q \cap T$. As stated above, the satisfaction of $y_Q$ resides on every sub-clique of $Q$, including $Q'$. The crucial point is that {\em the latter viewpoint will manifest itself only when we ``think'' of the sub-coalition $T$}. Moreover, {\em $Q$ will allocate only to its largest sub-clique in $T$}. The details given below will clarify these ideas. 

Let $y:  (\mbox{cliques in} \ G) \rightarrow \RR_+$ be a function assigning satisfaction to the  cliques of $G$. Define
\[ \sat(G) = \sum_{\mbox{clique $Q$ in} \ G}  {y_Q} .\]

Let $T \subseteq V$ and let $G'$ be the subgraph of $G$ induced on $T$. Each clique $Q$ in $G$ will allocate satisfaction to at most one clique in $G'$ as follows. If $Q \cap T = \emptyset$, then $Q$ will make no allocation to any clique in $G'$. Otherwise, it will allocate $y_Q$ to $Q' = Q \cap T$. In general, there may be several cliques in $G$ which may allocate to $Q'$. Hence the total allocation to $Q'$ is defined to be
\[ z_{Q'} = \sum_{Q \ {\mbox{in} \ G}: Q \cap T = Q'} {y_Q} ,\]
where $z:  (\mbox{cliques in} \ G') \rightarrow \RR_+$ is the function which assigns satisfaction to the cliques of $G'$. 

Finally, define 
\[ \sat(G') = \sum_{\mbox{clique $Q'$ in} \ G'}  {z_{Q'}} .\]

\begin{definition}
	\label{def.stable-core}
	We will say that imputation $y$ is in the core of the stable set game if
	\begin{enumerate}
		\item  $ \worth(G) = \sat(G) $
		\item  $ \forall G' \subseteq G, \ \worth(G') \leq \sat(G')  $
	\end{enumerate}
\end{definition}
 
\begin{remark}
 	\label{rem.top-down}
The contrast between the way allocations are made to sub-coalitions in the assignment game and in our game can be summarized as follows: Whereas the former follows a bottom-up process, we follow a top-down process. To the best of our knowledge, all previous works characterizing the cores of games follow a bottom-up process, similar to the assignment game.  
 \end{remark}

\begin{theorem}
	\label{thm.stable}
	An imputation $y$ is in the core of the stable set game over a perfect graph if and only if it is an optimal solution to the dual, LP(\ref{eq.stable-dual}). 
\end{theorem}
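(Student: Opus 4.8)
The plan is to prove both directions directly from LP-duality, using as the single structural input from the theory of perfect graphs the fact that, since $G$ and every vertex-induced subgraph $G'$ are perfect, the clique-constrained polytope of LP~(\ref{eq.stable-primal}) (and of its analogue on any $G'$) has integral vertices, so that its optimum equals $\worth(G)$ (resp.\ $\worth(G')$). I will also use that LP~(\ref{eq.stable-primal}) is feasible ($x=0$) and bounded (the singleton clique $\{v\}$ forces $x_v\le 1$), so strong duality holds for the pair (\ref{eq.stable-primal})--(\ref{eq.stable-dual}), and likewise for each $G'$, even though these LPs have exponentially many constraints and variables. A preliminary observation I would record first: for $T\subseteq V$ with induced subgraph $G'$, every clique of $G'$ is a clique of $G$ and $Q\cap T$ is a clique of $G'$ whenever it is nonempty, so regrouping the double sum in the definition of $z$ gives $\sat(G')=\sum_{Q\text{ in }G:\ Q\cap T\neq\emptyset} y_Q$.

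For the direction ``optimal dual $\Rightarrow$ core imputation'': condition~1 is immediate, since $\sat(G)=\sum_Q y_Q$ equals the optimum of LP~(\ref{eq.stable-dual}), hence (strong duality) of LP~(\ref{eq.stable-primal}), hence $\worth(G)$. For condition~2, fix $T$ and $G'$ and form the inherited vector $z_{Q'}=\sum_{Q\text{ in }G:\ Q\cap T=Q'}y_Q$ indexed by cliques $Q'$ of $G'$ --- this is exactly the allocation prescribed in the setup, and it is a nonnegative vector supported on cliques of $G'$. I would check that $z$ is feasible for the dual LP of $G'$: for $v\in T$, $\sum_{Q'\ni v} z_{Q'}=\sum_{Q\ni v} y_Q\ge w_v$, the first equality because with $v\in T$ fixed one has $v\in Q\cap T\iff v\in Q$ and each clique $Q$ of $G$ contributes to exactly one $z_{Q'}$ (namely $Q'=Q\cap T$), and the inequality because $y$ is dual-feasible on $G$. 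Since $\sum_{Q'}z_{Q'}=\sat(G')$, weak duality for $G'$ together with the integrality fact gives $\worth(G')=\OPT(G')\le\sat(G')$.

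For the converse, ``core imputation $\Rightarrow$ optimal dual'': $y$ is nonnegative by definition. To get dual feasibility I would apply condition~2 to the singleton coalition $T=\{v\}$: here $G'$ is the edgeless graph on $\{v\}$, so $\worth(G')=w_v$ while $\sat(G')=\sum_{Q\ni v}y_Q$, which yields $\sum_{Q\ni v}y_Q\ge w_v$ for every $v\in V$. Hence $y$ is feasible for LP~(\ref{eq.stable-dual}); and by condition~1 together with the integrality fact, $\sum_Q y_Q=\sat(G)=\worth(G)$ equals the common optimum of LPs~(\ref{eq.stable-primal}) and (\ref{eq.stable-dual}), so the feasible solution $y$ attains the dual optimum and is optimal. (Note this direction uses only condition~1 and the instances of condition~2 at singletons.)

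The main conceptual content has already been done in setting up the top-down allocation, so I do not expect a serious obstacle --- the argument is essentially LP-duality plus bookkeeping. The step that needs the most care is establishing the identity $\sat(G')=\sum_{Q\cap T\neq\emptyset}y_Q$ and verifying feasibility of the inherited dual $z$ for $G'$; and it is precisely here, in the passage $\worth(G')=\OPT(G')$, that perfection of every vertex-induced subgraph --- equivalently the TDI/integrality of their clique-constraint systems --- is indispensable. Everything else is weak/strong LP-duality applied to $G$ and to the subgraphs $G'$.
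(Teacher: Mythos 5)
Your proposal is correct and follows essentially the same route as the paper's proof: LP-duality combined with the TDI/integrality of the clique-constraint system on $G$ and on every induced subgraph $G'$, the inherited allocation $z$ checked to be dual-feasible on $G'$ via $\sum_{Q'\ni v} z_{Q'}=\sum_{Q\ni v} y_Q$, and singleton coalitions to recover dual feasibility in the converse direction. Your explicit regrouping identity $\sat(G')=\sum_{Q:\,Q\cap T\neq\emptyset} y_Q$ and the feasibility/boundedness justification of strong duality are just slightly more detailed bookkeeping of the same argument.
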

	
\begin{proof}
	($\Leftarrow$) First assume that $y$ is an optimal solution to the dual, LP(\ref{eq.stable-dual}). By the LP-duality theorem and the TDI of its linear system, $\worth(G) = \sat(G)$. 
	
	Let $T \subseteq V$ and let $G'$ be the graph induced on $T$. By the allocation procedure described above, for $v \in T$, each clique $Q$ in $G$ containing $v$ will contribute $y_Q$ to its largest sub-clique $Q'$ in $G'$, which will contain $v$. Therefore, 
		\[ \forall v \in T: \ \sum_{Q' \ni v} {z_{Q'}} = \sum_{Q \ni v} {y_Q} \geq w_v .\]
	Therefore $z$ is feasible for the dual LP (\ref{eq.stable-dual}), restricted to $G'$. Hence by the weak duality theorem, we get that the objective value function of the primal LP restricted to $G'$ is bounded by $\sat(G')$. Finally, since this restricted system is also TDI, we get $\worth(G') \leq  \sat(G')$. 
	
	($\Rightarrow$) Next, assume that imputation $y$ is in the core of the stable set game over a perfect graph. By definition, $\worth(G) = \sat(G) = \sum_{\mbox{clique $Q$ in} \ G}  {y_Q} $.
	
	Let $v \in V$ and let $G'$ be $G$ restricted to $v$. By the definition of core and the way $y$ yields $z$, 
	\[ w_v = \worth(G') \leq \sat(G') = \sum_{Q' \ni v} {z_{Q'}} = \sum_{Q \ni v} {y_Q} .\]
	Therefore, $y$ satisfies the constraint for each $v \in V$ in the dual, LP(\ref{eq.stable-dual}). Hence $y$ is dual feasible. This together with the previous statement gives that $y$ is an optimal solution to the dual, LP(\ref{eq.stable-dual}). 
\end{proof}

\begin{corollary}
	\label{cor.stable}
	The core of the stable set game over a perfect graph is non-empty. 
\end{corollary}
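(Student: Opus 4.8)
The plan is to read this off directly from \Cref{thm.stable}. That theorem identifies the core of the stable set game over a perfect graph $G$ with the set of optimal solutions to the dual LP~(\ref{eq.stable-dual}). Hence to prove the corollary it suffices to exhibit (the existence of) an optimal solution to LP~(\ref{eq.stable-dual}), and then invoke \Cref{thm.stable}.

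For the existence, I would appeal to strong LP-duality for the primal/dual pair (\ref{eq.stable-primal})--(\ref{eq.stable-dual}). The primal LP~(\ref{eq.stable-primal}) is feasible, since $x=0$ satisfies every clique constraint. It is also bounded: the singleton cliques $\{v\}$ contribute the constraints $x_v\le 1$, so the feasible region is contained in the cube $[0,1]^V$, hence is compact, and the linear objective $\sum_{v} w_v x_v$ attains a finite maximum on it, equal to $\worth(G)$ by the integrality (TDI) facts recalled in \Cref{sec.def-perfect}. A feasible bounded primal LP has an optimal dual solution with matching value; concretely, setting $y_{\{v\}} = w_v$ for each singleton clique and $y_Q = 0$ otherwise already shows the dual is feasible, and strong duality upgrades this to a dual optimum $y$ with $\sum_Q y_Q = \worth(G)$. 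By \Cref{thm.stable}, this $y$ lies in the core, so the core is non-empty.

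There is essentially no obstacle here — the corollary is a one-line consequence of the characterization theorem together with standard LP duality. The only point worth a sentence is that the dual has one variable $y_Q$ per clique, i.e.\ possibly exponentially (or, for the statement's purposes, finitely) many variables; this causes no difficulty, since the relevant primal is an ordinary finite-dimensional LP that is feasible and bounded, so its optimal value is attained and equals the dual optimum, which is supported on finitely many cliques.
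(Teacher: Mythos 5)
Your proposal is correct and matches the paper's (implicit) reasoning: the corollary is stated as an immediate consequence of \Cref{thm.stable}, since the dual LP~(\ref{eq.stable-dual}) has an optimal solution (the primal being feasible and bounded, with value $\worth(G)$ by TDI), and any such optimum is a core imputation by the theorem.
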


\section{The Matroid Independent Set Game, Given via a Rank Oracle}
\label{sec.polymatroid}

In Section \ref{sec.def-polymatroid}, we present the necessary background information. Once again, we will not credit individual papers for these (mostly well-known) facts; instead we refer the reader to Chapter 10 in \cite{GLS}.

\subsection{Definitions and Preliminaries}
\label{sec.def-polymatroid}

Let $\cM = (U, \cI)$ be a matroid, where $U$ is the ground set with $n$ elements and $\cI$ is its set of independent sets. Let $w: U \rightarrow \RR_+$ be a function that assigns weights to the elements. Assume further that we are given a {\em rank oracle} $r$: for any set $S \subseteq U$, $r(S)$ is the rank of $S$, i.e., the size of a maximum independent set in $S$. The rank function of a matroid is {\em submodular}, i.e., 
\[ r(S) + r(T) \geq r(S \cap T) + r(S \cup T) .\] 

It is well known that a maximum weight independent set can be found via the following greedy algorithm. Sort the elements of $U$ by decreasing weight, say $e_1, e_2, \ldots , e_n$. For $1 \leq i \leq n$ do: Pick $e_i$ if $r(e_1, \ldots e_i) > r(e_1, \ldots, e_{i-1})$.

\begin{definition}
	\label{def.matroid-game}
	Given matroid $\cM = (U, \cI)$ and weight function $w$ on its elements, the worth of the {\em matroid independent set game} is defined to be the weight of a maximum weight independent set in $\cM$ and is denoted by $\worth(\cM)$. 
\end{definition}

We wish to characterize the core imputations of the matroid independent set game. LP (\ref{eq.matroid-primal}) is the LP-relaxation of the maximum weight independent set problem;   here $x_e$ is the extent to which element $e$ is picked. 

	\begin{maxi}
		{} {\sum_{e \in U}  {w_{e} x_{e}}}
			{\label{eq.matroid-primal}}
		{}
		\addConstraint{x(S)}{\leq r(S) \quad}{\forall S \subseteq U}
		\addConstraint{x_{e}}{\geq 0}{\forall e \in U}
	\end{maxi}

The linear system of LP (\ref{eq.matroid-primal}) is TDI and therefore its polytope has integral vertices, i.e., they are independent sets of the matroid. This polytope is called a {\em polymatroid}. Taking $u_i$ and $v_j$ to be the dual variables for the first and second constraints of (\ref{eq.matroid-primal}), we obtain the dual LP: 

 	\begin{mini}
		{} {\sum_{S \subseteq U}  {r(S) \cdot y_S}} 
			{\label{eq.matroid-dual}}
		{}
		\addConstraint{\sum_{S \ni e} {y_S}}{ \geq w_{e} \quad }{\forall e \in U}
		\addConstraint{y_{S}}{\geq 0}{\forall S \subseteq U}
	\end{mini}

Given an oracle for the rank function $r$, the dual LP can be solved in polynomial time using the ellipsoid algorithm \cite{GLS}.

\subsection{Characterizing the Core}
\label{sec.matroid-core}

The process of allocating the worth of the game to sub-coalitions is similar to that followed in the stable set game over perfect graphs, i.e., it is top-down, as described in Remark \ref{rem.top-down}. An optimal dual distributes the worth of the game among subsets $S \subseteq U$. Once again, dividing the worth $y_S$, given to set $S$, among the agents in the set in not very meaningful. Instead, again we will view the worth $y_S$ as residing in every subset of $S$, and this viewpoint manifests itself only when we ``think'' of the sub-coalition $T$. As before, we will think of $y_S$ as the ``satisfaction'' of set $S$. 

\begin{remark}
	\label{rem.confusion}
	An added source of confusion in this setting is that the worth of the game is distributed among subsets of $U$, and sub-coalitions are also subsets of $U$. We will try to mitigate this confusion by naming the former $S, S'$ etc. and the latter as $T, T'$ etc. 
\end{remark}

Let $y:  2^U \rightarrow \RR_+$ be a function assigning satisfaction to the subsets of $U$. Define
\[ \sat(U) = \sum_{S \subseteq U}  {y_S} .\]

Consider sub-coalition $T \subseteq U$. Each subset $S$ of $U$ will allocate satisfaction to at most one subset of $T$ as follows. If $S \cap T = \emptyset$, then $S$ will make no allocation to any subset of $T$. Otherwise, it will allocate $y_S$ to the set $S' = S \cap T$, i.e., the largest subset of $S$ in sub-coalition $T$. Once again, there may be several subsets of $U$ which may allocate to $S'$. The total allocation to $S'$ is defined to be
\[ z_{S'} = \sum_{S \subseteq U: S \cap T = S'} {y_S} ,\]
where $z: 2^T \rightarrow \RR_+$ is the function which assigns satisfaction to subsets of $T$. 

Finally, define 
\[ \sat(T) = \sum_{S' \subseteq T}  {z_{S'}} .\]

\begin{definition}
	\label{def.matroid-core}
	We will say that imputation $y$ is in the core of the matroid independent set game if
	\begin{enumerate}
		\item  $ \worth(U) = \sat(U) $
		\item  $ \forall T \subseteq U, \ \worth(T) \leq \sat(T)  $
	\end{enumerate}
\end{definition}

\begin{theorem}
	\label{thm.matroid}
	An imputation $y$ is in the core of the matroid independent set game if and only if it is an optimal solution to the dual, LP(\ref{eq.matroid-dual}). 
\end{theorem}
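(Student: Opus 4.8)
The plan is to mirror the proof of Theorem~\ref{thm.stable} step for step, replacing ``clique in $G$'' by ``subset of $U$'' and ``subgraph induced on $T$'' by ``matroid $\cM$ restricted to $T$''. The one genuinely new feature is that the dual objective in LP~(\ref{eq.matroid-dual}) is the rank-weighted sum $\sum_{S} r(S)\, y_S$, so $\sat(\cdot)$ should be read as exactly this rank-weighted dual objective (with $r$ the rank function of the matroid under consideration); the proof then needs the elementary matroid fact that restriction leaves the rank of every subset of $T$ unchanged, which is precisely what keeps the allocation bookkeeping consistent. Throughout, the TDI/integrality of the polymatroid LP~(\ref{eq.matroid-primal}) plays the role that TDI of the clique system played before, and for simplicity one may assume $\cM$ has no loops (a loop affects the statement only through a trivial convention on its weight).

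($\Leftarrow$) Let $y$ be an optimal solution of LP~(\ref{eq.matroid-dual}). By LP-duality together with polymatroid integrality, the common optimum of LP~(\ref{eq.matroid-primal}) and LP~(\ref{eq.matroid-dual}) equals $\worth(U)$, and since $\sat(U)$ is by construction the value of the dual objective at $y$, this is condition~1. For condition~2, fix $T \subseteq U$ and let $\cM|T$ be the restriction of $\cM$ to $T$; its rank function is $r$ restricted to subsets of $T$, so by the same theorem its independent-set LP is again a polymatroid (hence TDI). Running the allocation, each $S \subseteq U$ meeting $T$ hands its whole value $y_S$ to $S' = S \cap T$, so for every $e \in T$ we get $\sum_{S' \ni e} z_{S'} = \sum_{S \ni e} y_S \ge w_e$; thus $z$ is feasible for the dual of the $\cM|T$ independent-set LP, and that dual's objective at $z$ is $\sum_{S' \subseteq T} r(S')\, z_{S'} = \sat(T)$. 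By weak duality the optimum of the $\cM|T$ primal is at most $\sat(T)$, and by TDI that optimum is exactly $\worth(T)$, giving condition~2.

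($\Rightarrow$) Let $y$ be in the core. Condition~1 says the dual objective of LP~(\ref{eq.matroid-dual}) at $y$ equals $\worth(U)$, the primal optimum, so it suffices to prove $y$ is dual-feasible. Apply condition~2 to the singleton $T = \{e\}$, the exact analogue of restricting $G$ to a single vertex in Theorem~\ref{thm.stable}: for $e$ not a loop, $\worth(\{e\}) = w_e$, while $r(\emptyset)=0$ and $r(\{e\})=1$ force $\sat(\{e\}) = r(\{e\})\, z_{\{e\}} = \sum_{S \ni e} y_S$, whence $w_e \le \sum_{S \ni e} y_S$. Therefore $y$ satisfies every constraint of LP~(\ref{eq.matroid-dual}), and a dual-feasible point whose objective matches the primal optimum is dual-optimal.

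The only real obstacle — everything else being identical bookkeeping to Theorem~\ref{thm.stable} — is handling the rank weights correctly: one must define $\sat(\cdot)$ as the rank-weighted dual objective and verify that the rank function of $\cM|T$ coincides with $r$ on all subsets of $T$, so that $\sum_{S' \subseteq T} r(S')\, z_{S'}$ genuinely is the objective of the restricted dual. Once that is set up, dual feasibility of $z$, the weak-duality inequality, and the upgrade from LP-optimum to $\worth(\cdot)$ via polymatroid integrality all go through verbatim.
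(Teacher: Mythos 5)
Your proof is correct and follows essentially the same route as the paper's: LP-duality plus TDI of the polymatroid system for the grand coalition, the top-down allocation making $z$ feasible for the dual of the LP restricted to $T$ plus weak duality (and integrality) for sub-coalitions, and singleton coalitions $T=\{e\}$ to establish dual feasibility in the converse. The only divergence is that you read $\sat(\cdot)$ as the rank-weighted dual objective $\sum_{S} r(S)\, y_S$ (and note that restriction preserves ranks, and that loops need a caveat for $\worth(\{e\}) = w_e$), whereas the paper literally defines $\sat(U)=\sum_{S \subseteq U} y_S$ without the rank factors; your reading is the one under which the step $\worth(U)=\sat(U)$ via LP-duality actually holds, so this is a clarification of the paper's bookkeeping rather than a genuinely different argument.
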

	
\begin{proof}
	($\Leftarrow$) First assume that $y$ is an optimal solution to the dual, LP(\ref{eq.matroid-dual}). By the LP-duality theorem and the TDI of this linear system, $\worth(U) = \sat(U)$. 
	
	Consider sub-coalition $T \subseteq U$. By the allocation procedure described above, 
		\[ \forall e \in T: \ \sum_{S' \subseteq T: \ S' \ni e} {z_{S'}} = \sum_{S \subseteq U: \ S \ni e} {y_S} \geq w_e .\]
	Therefore $z$ is feasible for the dual LP (\ref{eq.matroid-dual}), restricted to $T$. Therefore, by the weak duality theorem, we get that the objective value function of the primal LP restricted to $T$ is bounded by $\sat(T)$. Finally, since this restricted system is also TDI, we get $\worth(T) \leq  \sat(T)$. 
	
	($\Rightarrow$) Next, assume that imputation $y$ is in the core of the stable set game over a perfect graph. By definition, $\worth(U) = \sat(U) = \sum_{S \subseteq U} {y_S}$. 
	
	Let $e \in U$ and let $T'$ be the sub-coalition $\{e\}$. By the definition of core and the way $y$ yields $z$, 
	\[ w_e = \worth(T') \leq \sat(T') = \sum_{S' \subseteq T': \ S' \ni e} {z_{S'}} = \sum_{S \subseteq U: \ S \ni e} {y_S} .\]
	Therefore, $y$ satisfies the constraint for each $e \in U$ in the dual LP (\ref{eq.matroid-dual}). Hence $y$ is dual feasible. This together with the previous statement gives that $y$ is an optimal solution to the dual LP (\ref{eq.matroid-dual}). 
\end{proof}

\begin{corollary}
	\label{cor.matroid}
	The core of the matroid independent set game is non-empty. 
\end{corollary}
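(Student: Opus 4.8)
The plan is to obtain this as an immediate consequence of Theorem~\ref{thm.matroid} together with strong LP-duality. Since that theorem identifies the core of the matroid independent set game with the set of \emph{optimal} solutions of the dual LP~(\ref{eq.matroid-dual}), it suffices to exhibit one such optimal dual solution; by the theorem, any such solution is then, by definition, an element of the core, so the core is non-empty.

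Concretely, I would first check that the primal LP~(\ref{eq.matroid-primal}) is both feasible and bounded. Feasibility is clear, since $x = 0$ satisfies every constraint $x(S) \le r(S)$ (as $r(S) \ge 0$). For boundedness, apply the constraint to each singleton: $x_e = x(\{e\}) \le r(\{e\}) \le 1$ for all $e \in U$, so the feasible region is contained in the box $[0,1]^U$ and hence $\sum_{e \in U} w_e x_e \le \sum_{e \in U} w_e < \infty$ on it. By the strong duality theorem for linear programming, the dual LP~(\ref{eq.matroid-dual}) then has an optimal solution $y^\ast$, whose objective value equals the optimal primal value; and by the TDI of the primal linear system, hence integrality of the polymatroid, the optimal primal value equals $\worth(U)$. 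By Theorem~\ref{thm.matroid}, $y^\ast$ is a core imputation.

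I do not anticipate a genuine obstacle: the corollary is a routine byproduct of the characterization in Theorem~\ref{thm.matroid} and the mere existence of a dual optimum, exactly as Corollary~\ref{cor.stable} follows from Theorem~\ref{thm.stable}. The one point meriting a moment's care is the presence of loops (elements $e$ with $r(\{e\}) = 0$): this forces $x_e = 0$ in the primal but affects neither feasibility nor boundedness, so the reduction goes through unchanged. This stands in pointed contrast to games such as the general graph matching game, whose core can be empty precisely because no analogue of Theorem~\ref{thm.matroid} is available there.
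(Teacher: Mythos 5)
Your proposal is correct and matches the paper's (implicit) argument: the corollary follows immediately from Theorem~\ref{thm.matroid} once one notes that the primal LP~(\ref{eq.matroid-primal}) is feasible and bounded, so an optimal dual solution to LP~(\ref{eq.matroid-dual}) exists and is, by the theorem, a core imputation. Your added checks (feasibility of $x=0$, boundedness via $x_e \le r(\{e\}) \le 1$, and the harmlessness of loops) are fine but routine.
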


\section{Acknowledgements}
\label{sec.ack}

I wish to thank Gerard Cornuejols, Federico Echenique, Martin Groetschel, Herv\'e Moulin and Joseph Root for valuable discussions.

	\bibliographystyle{alpha}
	\bibliography{refs}

\end{document}